\DeclareMathAlphabet{\mathsfbr}{OT1}{cmss}{m}{n}
\SetMathAlphabet{\mathsfbr}{bold}{OT1}{cmss}{bx}{n}
\DeclareRobustCommand{\msf}[1]{%
  \ifcat\noexpand#1\relax\msfgreek{#1}\else\mathsfbr{#1}\fi
}
\newcommand{\msfgreek}[1]{\csname s\expandafter\@gobble\string#1\endcsname}
\DeclareRobustCommand{\mcal}[1]{%
  \ifcat\noexpand#1\relax\mathnormal{#1}\else\cal{#1}\fi
}
\DeclareRobustCommand{\BM}[1]{%
  \ifcat\noexpand#1\relax\bm{\boldUppercaseItalicGreek{#1}}\else\bm{#1}\fi
}
\newcommand{\boldUppercaseItalicGreek}[1]{\csname var\expandafter\@gobble\string#1\endcsname}
\definecolor{BLUE}{rgb}{0,0,1}
\newtheorem{theorem}{Theorem}
\newtheorem{definition}{Definition}
\newtheorem{remark}{Remark}
\newtheorem{lemma}{Lemma}
\acrodef{its}[ITS]{intelligent transportation system}
\title{\LARGE \bf
Distributed Scheduling at Non-Signalized Intersections with Mixed Cooperative and Non-Cooperative Vehicles
}
\author{Feihong~Yang and Yuan~Shen 
\thanks{The authors are with the Department of Electronic Engineering, Tsinghua University, and also with the Beijing National Research Center for Information Science and Technology, Beijing 100084, China (e-mail: {yfh17@mails.tsinghua.edu.cn}, {shenyuan\_ee@tsinghua.edu.cn}).}
}
\begin{document}
\bstctlcite{IEEEexample:BSTcontrol}
\newgeometry{top=19.1mm,left=19.1mm,right=19.1mm,bottom=19.1mm}


\maketitle


\begin{abstract}
Intersection management with mixed cooperative and non-cooperative vehicles is crucial in next-generation transportation systems. For fully non-cooperative systems, a minimax scheduling framework was established, while it is inefficient in mixed systems as the benefit of cooperation is not exploited. This letter focuses on the efficient scheduling in mixed systems and proposes a two-stage policy that makes full use of the cooperation relation. Specifically, a long-horizon self-organization policy is first developed to optimize the passing order of cooperative vehicles in a distributed manner, which is proved convergent when inbound roads are sufficiently long. Then a short-horizon trajectory planning policy is proposed to improve the efficiency when an ego-vehicle faces both cooperative and non-cooperative vehicles, and its safety and efficiency are theoretically validated. Furthermore, numerical simulations verify that the proposed policies can effectively reduce the scheduling cost and improve the throughput for cooperative vehicles.
\end{abstract}



\acresetall

\acresetall		

\section{INTRODUCTION}\label{sec:intro}
Traffic intersections are recognized as a critical bottleneck in transportation systems, and safe and efficient intersection management protocols are crucial for the development of modern cities \cite{SchAlbEisLom:B21}. Advancing technologies in autonomous and connected vehicles yielded various types of non-signalized scheduling schemes, which can improve the performance metrics such as delay and throughput by utilizing the intelligence of vehicles \cite{RioMal:J17, BanChaMal:J22, GhoSei:J22, ChaMal:J22}. However, over a long period of time, connected vehicles will have to face a mixed environment with both cooperative and non-cooperative vehicles. Thus, developing scheduling policies while addressing the trajectory uncertainty of non-cooperative vehicles becomes an important challenge.

Some existing works focus on reducing the uncertainty of non-cooperative vehicles through different methods. For example, in \cite{PruZhaDubHuaKis:C19, ChaMan:J22, TiaLiKolYilGir:J22}, mathematical models are adopted to characterize the behavior of non-cooperative vehicles; learning-based policies are introduced in \cite{LinZhaShaWanYuZha:C19, KaiWanCheHaoZhaLiu:C20}, in which the uncertainty of non-cooperative vehicles is limited by the given sample set. However, these methods cannot ensure the safety when the non-cooperative vehicles violate the given behavior models and the sampled patterns. In contrast, some other works such as \cite{MilPerOniGon:J10, HafDel:J11, ColDel:J15} take the full uncertainty into consideration, while they only focus on the safety issue without considering the scheduling efficiency.

\begin{figure} [!t]
  \centering%
  \includegraphics[width=0.38\textwidth]{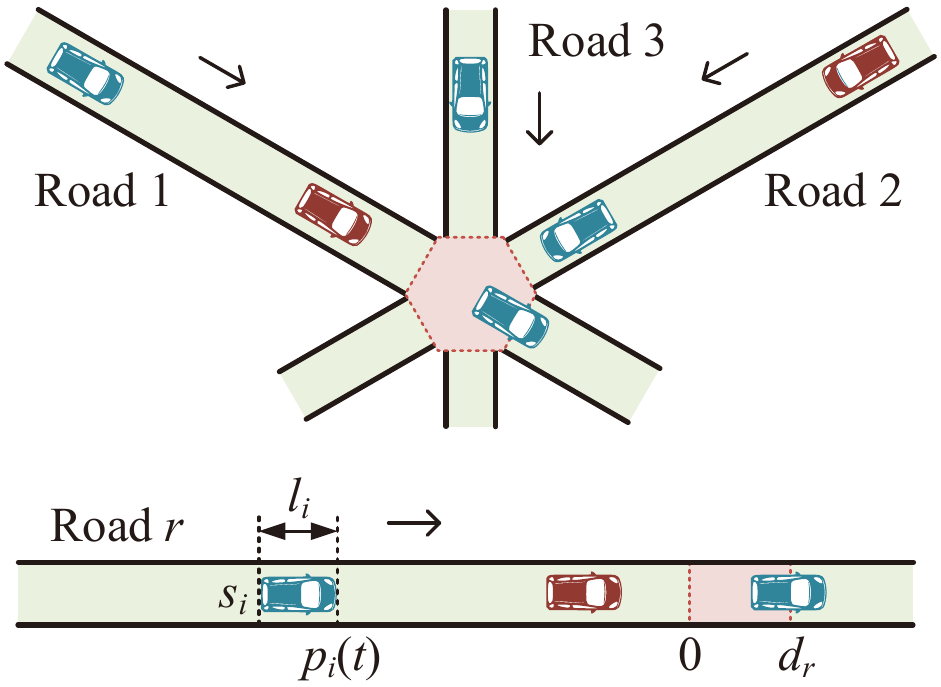}
  \caption{The upper subfigure illustrates the scheduling system with $R=3$, where the blue and red vehicles represent the cooperative and non-cooperative ones, respectively. The lower subfigure shows the physical meanings of system parameters on a general road $r$.}
  \label{fig:1}
\end{figure}

To overcome the challenges caused by non-cooperative vehicles, a minimax scheduling framework is proposed in \cite{YanShe:J22a} which can ensure the safety and achieve a robust performance. However, since the minimax framework is developed on a fully non-cooperative system model, it is inefficient in mixed cooperative and non-cooperative systems as it fails to exploit the benefit of cooperation. Specifically, the cooperation between vehicles can help improve the efficiency from the following two aspects.

First, the passing order of cooperative vehicles can be optimized in advance. However, in most existing methods of optimizing the passing order, a centralized controller with strong communication and computation abilities is often required \cite{MedCreLefWou:J20, ZhaLiuNgo:J21}, and thus these methods are costly to implement extensively. A decentralized order optimization method is proposed in \cite{YanShe:C20b}, while it only works for two-road intersections and cannot be extended to more general scenarios.

Second, the trajectory planning policy can be further modified. Note that the policies against non-cooperative vehicles are generally more conservative to ensure the robustness. When facing cooperative vehicles, more efficient trajectories can be generated without increasing the risk of collision. In \cite{MedWouNij:J18}, a trajectory planning framework for cooperative vehicles is proposed; however, it is not applicable for mixed cooperative and non-cooperative systems.

In this letter, we establish a two-stage policy for the systems with mixed cooperative and non-cooperative vehicles to exploit the benefit of cooperation by the two intuitions above. First, a long-horizon self-organization policy is developed to determine the passing order of cooperative vehicles through distributed optimizations and trajectory adjustments. Then, a short-horizon trajectory planning policy is proposed to improve the overall efficiency of cooperative vehicles while ensuring the robustness against non-cooperative vehicles. Moreover, theoretical and numerical results validating the safety and efficiency of proposed policy are also provided.


\section{PROBLEM FORMULATION}\label{sec:mod}
\subsection{System Model}
Consider the system with $R$ roads indexed by $1,\ldots, R$ crossing over an intersection area, as shown in Fig. \ref{fig:1}. On each road $r$, let $(0,d_r)$ be the range of the intersection area. Let $\mathcal{S}_r$ be the set of vehicles on the road $r$, and let $\mathcal{S}\triangleq\bigcup_{r=1}^R \mathcal{S}_r$ be the set of all vehicles. We assume that all vehicles always move along the corresponding road, and we only focus on their longitudinal movement. Both cooperative and non-cooperative vehicles exist in the system, and let $\mathcal{S}^{\text{coop}}$ and $\mathcal{S}^{\text{non-coop}}$ be the corresponding vehicle sets. All cooperative vehicles follow the same policy, which we aim to design in this letter. In contrast, the policy of any non-cooperative vehicle is unknown and uncontrollable.

For each vehicle $s_i\in\mathcal{S}$, the length, front position, velocity and acceleration are denoted by $l_i$, $p_i(t)$, $v_i(t)$ and $a_i(t)$, respectively, where $t$ represents the time. The vector $\langle p_i(t),v_i(t)\rangle$ is also denoted as the state of $s_i$ at time $t$. The velocity and acceleration of the vehicles are constrained by
\begin{equation}
  v_i(t)\in[0,v_{i,\mathrm{M}}],\; a_i(t)\in[-a_{i,\mathrm{m}},a_{i,\mathrm{M}}],\; \forall i,t.
\end{equation}
We assume that $l_i$, $v_{i,\mathrm{M}}$, $a_{i,\mathrm{m}}$ and $a_{i,\mathrm{M}}$ are equal for all cooperative vehicles, and we drop the subscript $i$ in this case. For each vehicle $s_i$, the initial position $p_i(0)\ll 0$, and the initial velocity $v_i(0)=v_{i,\mathrm{M}}$. In order to ensure the safety of a vehicle $s_i\in\mathcal{S}_r$, the following conditions need to be satisfied:
\begin{itemize}
  \item For any other vehicle $s_j\in\mathcal{S}_r$ on the same road, any position $x\in(0,d_r)$ within the intersection area should not be simultaneously occupied by $s_i$ and $s_j$, i.e.,\footnote{This condition implies that we do not allow overtaking in the intersection area. Nevertheless, overtaking can occur outside the intersection area.}
  \begin{equation}
    \{t\,|\,p_i(t)\in(x,x+l_i),p_j(t)\in(x,x+l_j)\}=\varnothing.\label{eq:S1}
  \end{equation}
  \item For any vehicle $s_j\in\mathcal{S}_{r'}$ with $r'\ne r$, the intersection area should not contain $s_i$ and $s_j$ simultaneously, i.e.,
  \begin{equation}
    \{t\,|\,p_i(t)\in(0,l_i+d_r),p_j(t)\in(0,l_j+d_{r'})\}=\varnothing.\label{eq:S2}
  \end{equation}
\end{itemize}

Furthermore, for each cooperative vehicle $s_i\in\mathcal{S}_r$, we make the following assumptions.
\begin{itemize}
  \item $s_i$ knows the existence of all vehicles.
  \item $s_i$ can actively request any desired information from other cooperative vehicles, and the communication delay is smaller than $\tau$.
  \item $s_i$ can passively receive the state information on non-cooperative vehicles, and the delay is not guaranteed.
  \item $s_i$ knows the indexes of the cooperative vehicles right in front of it and right behind it on the same road, denoted by $\mathfrak{F}(s_i)$ and $\mathfrak{B}(s_i)$, respectively;\footnote{For the first (or last) cooperative vehicle on a road, $\mathfrak{F}(s_i)$ (or $\mathfrak{B}(s_i)$) can be defined as a virtual vehicle with position $+\infty$ (or $-\infty$).} and also knows the index of the vehicle 
  \begin{equation}
    \mathfrak{C}_{r'}(s_i)\triangleq\mathop{\arg\min}_{s_j\in\mathcal{S}_{r'}\cap\mathcal{S}^{\text{coop}}} |p_j(0)-p_i(0)|\label{eq:Cr}
  \end{equation}
  for any $r'\ne r$.
\end{itemize}

Our goal is to ensure the safety and improve the overall efficiency for all cooperative vehicles. Specifically, we use the \emph{relative scheduling cost} to evaluate the efficiency of a cooperative vehicle $s_i$, which is defined as follows.
\begin{definition}
The relative scheduling cost of the cooperative vehicle $s_i$ during the time interval $[0,t_0]$ is defined by
\begin{equation}
  C_i(t_0)\triangleq p_i(0)+v_{\mathrm{M}}t_0 -\Big[p_i(t_0)-\frac{(v_{\mathrm{M}}-v_i(t_0))^2}{2a_{\mathrm{M}}}\Big].
\end{equation}
\end{definition}

\vspace{0.5em}
\begin{remark}
Intuitively, if there is no intersection area, the ideal trajectory of $s_i$ is to first accelerate to $v_{\mathrm{M}}$ with the acceleration $a_{\mathrm{M}}$ and then remain the constant velocity. Let $\sigma_0$ be the ideal trajectory of $s_i$ at time $0$, and $\sigma_{t_0}$ be the ideal trajectory of $s_i$ at time $t_0$. Then $C_i(t_0)$ is the relative distance between $\sigma_0$ and $\sigma_{t_0}$ at a sufficiently large time. Therefore, the relative scheduling cost characterizes the influence of the trajectory adjustment during $[0,t_0]$ caused by scheduling.
\end{remark}

%
%
%
%

\subsection{Overview of the Letter}
In our previous work \cite{YanShe:J22a}, we considered the special case with only one cooperative vehicle (i.e., $\mathcal{S}^{\text{coop}}=\{s_0\}$), and formulated a minimax policy for $s_0$ to optimize the worst-case performance against the trajectory uncertainty of the non-cooperative vehicles. Although the minimax policy is applicable in the general case with multiple cooperative vehicles, it is inefficient since it fails to make use of the cooperation among the vehicles in $\mathcal{S}^{\text{coop}}$. Specifically, the following two points can be considered which can significantly improve the overall efficiency.
\begin{itemize}
  \item The passing order of the cooperative vehicles can be determined in advance. Furthermore, it can also be optimized for a better performance. Intuitively, the overall efficiency will be improved if vehicles from the same road pass the intersection area continuously.
  \item The trajectory uncertainty of the cooperative vehicles can be reduced through the policy design. Therefore, many ``worst cases'' due to the uncertainty of non-cooperative vehicles will never occur, and the minimax optimization becomes too conservative.
\end{itemize}
The policy proposed in this letter will address both these two points. Specifically, for the order determination problem, we focus on the ``long-horizon'' stage where the vehicles are far from the intersection area. In contrast, the uncertainty reduction problem is considered in the ``short-horizon'' stage.

\section{SCHEDULING POLICY}
In this section, we separately propose the scheduling policy in the long-horizon and the short-horizon stages. Specifically, we assume that the position ranges of the two stages are $[-A-B,-B]$ and $[-B,0]$, respectively.

\subsection{Long-Horizon Self-Organization}\label{subsec:long}
First, we develop a distributed self-organization policy for the cooperative vehicles within the position range $[-A-B,-B]$. In this stage, we only consider the cooperative vehicles, and the aim of this stage includes the following two aspects.
\begin{itemize}
  \item Determine the passing order of the cooperative vehicles. Without loss of generality, for the vehicles on the same road, the initial order is maintained.
  \item Adjust the relative positions of the cooperative vehicles based on the passing order.
\end{itemize}
\begin{remark}
Note that once the passing order is fixed, some cooperative vehicles have to adjust their trajectories in order to avoid collisions. In the proposed policy, this ineliminable adjustment is performed guadually from the long-horizon stage, so that the amount of adjustment in the short-horizon stage will be reduced, and the complete trajectory of cooperative vehicles will be smoother.
\end{remark}

Instead of separately considering the two aspects, the proposed self-organization policy simultaneously addresses both of them. Specifically, each cooperative vehicle independently collects information, formulates optimization problems, and adjusts its own trajectory. The passing order is then determined according to the time when they leave the position range $[-A-B,-B]$.

Before presenting the policy, we first introduce some basic settings. Let $\{t'_k\,|\,k\ge0\}$ be a set of times shared by all cooperative vehicles,\footnote{Considering the non-synchronization, an error in the scale of millisecond is allowed, which has little effect on the states of vehicles.} and each vehicle determines its trajectory in $[t'_k,t'_{k+1}]$ during the previous interval $[t'_{k-1},t'_k]$. We also request that the instantaneous velocity of any cooperative vehicle at any $t'_k$ is $v_{\mathrm{M}}$. Thus, determining $p_i(t'_{k+1})$ is sufficient for $s_i$ during the interval $[t'_{k-1},t'_k]$. Furthermore, we assume that the gaps between any adjacent $t'_k$ and $t'_{k+1}$ are sufficiently large, so that during $[t'_k,t'_{k+1}]$, the average velocity of any cooperative vehicle can arbitrarily choose from the interval $[v_{\mathrm{M}}-v_{\mathrm{R}},v_{\mathrm{M}}]$, and any cooperative vehicle can communicate with at least $2R+W$ cooperative vehicles, where $0<v_{\mathrm{R}}\ll v_{\mathrm{M}}$ represents the maximum velocity adjustment and $W\ge 1$ represents the scale of the vehicle set that will be used in order optimization.

\SetInd{0.25em}{0.5em}
\begin{algorithm}[!t]
  \caption{Long-horizon self-organization policy for $s_i$ during $[t'_{k-1},t'_k]$}\label{alg:long}
  \KwOut{$p_i(t'_{k+1})$}

  Initialize $\mathcal{S}_i^k=\{s_i\}$, $P_i^k(s_i)=p_i(t'_k)$.

  \While{$|\mathcal{S}_i^k|<W+2R$}{
      \uIf{there exists some road with no vehicles in $\mathcal{S}_i^k$}{
          Add a cooperative vehicle from this road to $\mathcal{S}_i^k$.
      }
      \uElseIf{there exists some road ${r'}$ on which the last vehicle (based on the initial order) in $\mathcal{S}_i^k$, denoted by $s_j$, satisfies $P_i^k(s_j)-d_{r'}/2>P_i^k(s_i)-d_{r}/2$}{
          Add the vehicle $\mathfrak{B}(s_j)$ to $\mathcal{S}_i^k$.
      }
      \Else{
          Find the road ${r'}$ on which the first vehicle (based on the initial order) in $\mathcal{S}_i^k$, denoted by $s_j$, has the smallest $P_i^k(s_j)-d_{r'}/2$ among all roads.

          Add the vehicle $\mathfrak{F}(s_j)$ to $\mathcal{S}_i^k$.
      }

      Reset $P_i^k(s_j)=p_j(t'_k)$ for all $s_j\in\mathcal{S}_i^k$, and on each road ${r'}$, successively update
      \begin{equation}
        P_i^k(s_m)\leftarrow \min\big\{P_i^k(s_m),P_i^k(\mathfrak{F}(s_m))-(l+\delta)\big\}
      \end{equation}
      where $s_m$ varies from the second vehicle to the last vehicle (based on the initial order) in $\mathcal{S}_i^k\cap\mathcal{S}_{r'}$.
  }

  On each road ${r'}$, remove all vehicles $s_j$ satisfying $P_i^k(s_j)-d_{r'}/2<P_i^k(s_i)-d_{r}/2$ from $\mathcal{S}_i^k$.

  \If{$|\mathcal{S}_i^k|>W$}{
      Remove $|\mathcal{S}_i^k|-W$ vehicles $s_j$ with the largest $P_i^k(s_j)-d_{r'}/2$ from $\mathcal{S}_i^k$, where ${r'}$ is the road of $s_j$.
  }

  Obtain $p^*_i$ by solving the optimization problem \eqref{eq:opt}.

  Let $p_i(t'_{k+1})=\max\big\{p_i(t'_k)+(v_{\mathrm{M}}-v_{\mathrm{R}})(t'_{k+1}-t'_k),$ $p^*_i+v_{\mathrm{M}}(t'_{k+1}-t'_k)\big\}$.
\end{algorithm}

The long-horizon policy for a general cooperative vehicle $s_i\in\mathcal{S}_r$ during $[t'_{k-1},t'_k]$ is shown in Algorithm \ref{alg:long}. The policy can be divided into the following two parts.

\emph{1) Lines 1--13: Information Collection.} The goal of these lines is determining a subset of cooperative vehicles $\mathcal{S}_i^k$ and a function $P_i^k$. First, according to Line 10, $P_i^k$ can be regarded as the equivalent vehicle positions at time $t_k$ after taking the initial order of each road into consideration. Then note that the set $\mathcal{S}_i^k$ is obtained by gradually adding vehicles according to Lines 3--9 followed by a truncation in Lines 11--13. By these steps, $\mathcal{S}_i^k$ roughly contains at most $W$ vehicles $s_j$ with the smallest $P_i^k(s_j)-d_{r'}/2$ among those satisfying $P_i^k(s_j)-d_{r'}/2\ge P_i^k(s_i)-d_{r}/2$, where $r'$ represents the road of $s_j$.



\emph{2) Lines 14--15: Order Optimization.} The goal of these lines is determining $p_i(t'_{k+1})$ by solving the following optimization problem based on the obtained information.
  \begin{align}
      \hspace{-1em}\min_{\{p^*_j\}}\; &\sum_{s_j\in\mathcal{S}_i^k}\big(P_i^k(s_j)-p^*_j\big)\label{eq:opt}\\
      \hspace{-1em}\text{s.t.}\;& p^*_j\le p_j(t_k), \quad\forall s_j\in\mathcal{S}_i^k\notag\\
      &p^*_{j_1}-p^*_{j_2}\ge l+\delta, \notag\\
      &\quad\forall s_{j_1},s_{j_2}\in\mathcal{S}_i^k\cap\mathcal{S}_{r'} \text{ with }p_{j_1}(0)>p_{j_2}(0), \forall r'\notag\\
      &p^*_{j_1}-p^*_{j_2}\ge l+\delta+d_{r'_1} \text{ or } p^*_{j_2}-p^*_{j_1}\ge l+\delta+d_{r'_2}, \notag\\
      &\quad\forall s_{j_1}\in\mathcal{S}_i^k\cap\mathcal{S}_{r'_1}, s_{j_2}\in\mathcal{S}_i^k\cap\mathcal{S}_{r'_2}, r'_1\ne r'_2.\notag
  \end{align}
Intuitively, the objective function in \eqref{eq:opt} represents the total position shift of all vehicles in $\mathcal{S}_i^k$; the first constraint corresponds to the feasible condition;\footnote{More specifically, since the ideal trajectory corresponds to the maximum velocity, vehicles can only decelerate to make adjustments.} the last two constraints correspond to the safety conditions \eqref{eq:S1} and \eqref{eq:S2}. Here, $\delta$ is a positive parameter which accounts for the extra safety distance between vehicles. After obtaining $p^*_i$, the output $p_i(t'_{k+1})$ is determined in Line 15 with considering the minimum average velocity for cooperative vehicles, which is $v_{\mathrm{M}}-v_{\mathrm{R}}$.

Now we provide further analyses on the proposed self-organization policy. Note that for different cooperative vehicles, the optimization problems \eqref{eq:opt} also vary from each other. Nevertheless, in the next theorem, we show that this mismatch will not influence the effectiveness of the proposed policy.

\begin{theorem}\label{thm:1}
Let $\mathcal{S}^{\text{coop}}$ be a finite set, and assume that $p_i(0)$($\forall s_i\in\mathcal{S}^{\text{coop}}$), $d_r$($\forall r$), $v_{\mathrm{M}}(t'_{k+1}-t'_k)$($\forall k$), $v_{\mathrm{R}}(t'_{k+1}-t'_k)$($\forall k$) and $l+\delta$ are linearly independent under integer coefficients. If all cooperative vehicles always follow the proposed self-organization policy, then there exists $t_{\mathrm{c}}<+\infty$ such that (i) all cooperative vehicles maintain the maximum velocity $v_{\mathrm{M}}$ after time $t_{\mathrm{c}}$, and (ii) any two cooperative vehicles $s_{j_1}\in\mathcal{S}_{r'_1}$ and $s_{j_2}\in\mathcal{S}_{r'_2}$ satisfy
\begin{align}
  & p_{j_1}(t_{\mathrm{c}})-p_{j_2}(t_{\mathrm{c}}) \ge l+\delta+d_{r'_1}\cdot \mathds{1}_{r'_1\ne r'_2} \notag\\
  \text{or}\quad & p_{j_2}(t_{\mathrm{c}})-p_{j_1}(t_{\mathrm{c}}) \ge l+\delta+d_{r'_2}\cdot \mathds{1}_{r'_1\ne r'_2}\label{eq:conv}
\end{align}
where $\mathds{1}$ is the indicator function.
\end{theorem}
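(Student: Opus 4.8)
The plan is to establish two things: a convergence claim (every cooperative vehicle eventually stops adjusting and travels at $v_{\mathrm{M}}$ forever) and a separation claim (the final configuration satisfies the safety gaps in (\ref{eq:conv})). I would prove convergence first, since once all vehicles travel at $v_{\mathrm{M}}$ the $P_i^k$ values stabilize, and the separation follows from the optimization constraints evaluated at the converged state.

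\textbf{Step 1: A global monotone potential.} I would first argue that each vehicle's trajectory is \emph{non-increasing relative to its ideal trajectory}. Line~15 forces the average velocity in $[t'_k,t'_{k+1}]$ to lie in $[v_{\mathrm{M}}-v_{\mathrm{R}},v_{\mathrm{M}}]$, so the cumulative lag $\ell_i(k) \triangleq p_i(0)+v_{\mathrm{M}}t'_k - p_i(t'_k)$ is non-decreasing in $k$ (a vehicle can only fall further behind its ideal, never catch up, because the first constraint $p^*_j\le p_j(t_k)$ and the $\max$ in Line~15 prevent speeding up past $v_{\mathrm{M}}$). The key is to show this lag is \emph{bounded above}. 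Intuitively, the total adjustment any vehicle must absorb is bounded by the cumulative conflict it has with the finitely many vehicles ordered ahead of it; since $\mathcal{S}^{\text{coop}}$ is finite, the worst-case shift needed to resolve all cross-road and same-road conflicts is a finite constant determined by the $d_r$'s and $l+\delta$. A bounded monotone sequence converges, so each $\ell_i(k)$ converges; because the velocity adjustment per interval is the discrete increment $\ell_i(k+1)-\ell_i(k)\in\{$values bounded below by a gap$\}$, convergence of the sum forces the increments to vanish, i.e., $v_i=v_{\mathrm{M}}$ for all large $k$. This gives a finite $t_{\mathrm{c}}$ for part~(i).

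\textbf{Step 2: Reconciling the mismatched local problems.} The main obstacle is exactly the issue flagged before the theorem: different vehicles solve \emph{different} problems (\ref{eq:opt}) over different sets $\mathcal{S}_i^k$, so the $p^*_j$ that $s_j$ computes for itself need not match the $p^*_j$ that a neighbor $s_i$ imputes to it. I would handle this by showing the information-collection routine (Lines~1--13) is \emph{consistent on overlaps}: because $P_i^k$ is defined by the deterministic recursion in Line~10 from shared quantities $p_j(t'_k)$, $d_{r'}$, $l+\delta$, any two vehicles agree on $P^k$ for vehicles they both include, and the set-construction rules (comparing $P_i^k(s_j)-d_{r'}/2$) are symmetric enough that the relevant local orderings are globally consistent. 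The linear-independence hypothesis is the crucial lever here: it guarantees that the comparisons $P_i^k(s_j)-d_{r'}/2$ versus $P_i^k(s_i)-d_{r}/2$ and all the constraint boundaries in (\ref{eq:opt}) are never ties, so the argmin/selection steps and the inclusion condition $|\mathcal{S}_i^k|<W+2R$ produce a well-defined, tie-free decision that all vehicles compute identically on their common view. This removes the apparent mismatch and lets me treat the ensemble as converging to a single globally consistent limiting configuration $\{p^\infty_j\}$.

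\textbf{Step 3: Separation at the limit.} Finally, evaluate the converged state. Since velocities are eventually $v_{\mathrm{M}}$, the equivalent positions $P^k$ and the solutions $p^*$ stabilize; in the limit each active vehicle's position equals the value it computed, and the last two constraints of (\ref{eq:opt})---which are precisely the same-road gap $l+\delta$ and the cross-road gap $l+\delta+d_{r'}$---hold among all pairs that lie in a common $\mathcal{S}_i^k$. I would then argue every relevant pair is eventually co-included (the window size $W+2R$ and the self-centering truncation in Lines~11--13 guarantee that any two vehicles whose equivalent positions are comparable appear together in some vehicle's set), so (\ref{eq:conv}) holds for all pairs, giving part~(ii). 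The genuinely hard part is Step~2: converting the per-vehicle local optimizations into a single coherent limit, and it is there that the linear-independence assumption does the essential work of eliminating ties and forcing all vehicles' discrete decisions to agree.
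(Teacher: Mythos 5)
Your high-level intuition (monotone adjustment, bounded total adjustment, eventual separation) matches the spirit of the paper, but two of your steps have genuine gaps, and the paper resolves the central difficulty by a quite different mechanism. The crux is your Step 1: you assert that the cumulative lag $\ell_i(k)$ is bounded above by ``the cumulative conflict with the finitely many vehicles ordered ahead,'' but the passing order is not given in advance --- it is an output of the process --- so ``ahead'' is undefined until convergence is already established; the bound is circular as stated. You also need the nonzero increments of $\ell_i(k)$ to be bounded below by a positive constant (otherwise the lag could converge while the vehicle decelerates slightly forever), and you only assert this. The paper supplies exactly this missing quantitative step via Lemma~\ref{lemma:1}: under the inductive hypothesis the local problem \eqref{eq:opt} of the \emph{frontmost not-yet-converged} vehicle has an explicit closed-form solution, and combined with Line~15 of Algorithm~\ref{alg:long} the gap function $G_{i_0(k)}$ either closes completely in one step or increases by at least $v_{\mathrm{R}}(t'_{k+1}-t'_k)$. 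The paper then runs a front-to-back induction on the set $\mathcal{W}_{k,n}$ of the $n$ leading vehicles (in equivalent position $\tilde p_i = p_i - d_r/2$), which simultaneously delivers boundedness, termination, and the pairwise separation \eqref{eq:conv} for the enlarged converged set --- your Step~3 separation claim falls out of this construction rather than from a limit argument.

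Your Step 2 is the more serious problem: the claim that the per-vehicle problems are ``consistent on overlaps'' is not needed and is likely not provable. Different vehicles include different sets $\mathcal{S}_i^k$ (the truncation to $W$ vehicles is centered on the ego-vehicle), so the component $p^*_j$ that $s_i$ computes for a neighbor $s_j$ genuinely need not equal what $s_j$ computes for itself; linear independence removes ties in comparisons but cannot make two different optimization problems agree. The paper never reconciles the local solutions: each vehicle implements only its \emph{own} component $p^*_i$ (Line~15), and the mismatch is rendered harmless because the induction only ever needs to control the decision of one specific vehicle at a time --- the frontmost candidate $s_{i_0(k)}$ --- whose set $\mathcal{S}_{i_0(k)}^{k}$ is contained in the already-converged (and hence already-separated) vehicles plus itself, so Lemma~\ref{lemma:1} applies and its behavior is fully predictable. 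The linear-independence hypothesis is used only to guarantee that equivalent positions are never exactly equal (so argmax selections and the non-degeneracy hypothesis of Lemma~\ref{lemma:1} hold), not to synchronize decisions across vehicles. To repair your proof you would need to replace Step~2 entirely with an ordering argument of this kind and supply the explicit solution of \eqref{eq:opt} that makes the increment lower bound rigorous.
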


\begin{proof}
See Appendix \ref{pf:thm1}.
\end{proof}

Intuitively, \eqref{eq:conv} means that the positions of $s_{j_1}$ and $s_{j_2}$ have been sufficiently separated such that these two vehicles will not collide if they maintain the maximum velocity in the future. Therefore, Theorem \ref{thm:1} means that if the proposed policy is followed for a sufficiently long time, all cooperative vehicles will finally complete all the necessary position adjustments and reach a consensus on their passing order.

\begin{remark}
The linear independence assumption in Theorem \ref{thm:1} is introduced in order to avoid the case that the positions of two vehicles are equal, and it can be satisfied in practice by adding a small random number to each parameter.
\end{remark}

\subsection{Short-Horizon Trajectory Planning}\label{subsec:short}
In this subsection, we focus on the trajectory planning of the cooperative vehicles within the short-horizon position range $[-B,0]$. Note that both cooperative and non-cooperative vehicles exist in the system, and we aim to improve the efficiency of cooperative vehicles while ensuring their safety. During this stage, the passing order of cooperative vehicles has been fixed.

The main challenge in this stage is that each vehicle needs to consider the existence of both cooperative and non-cooperative vehicles. Specifically, direct following policies for fully cooperative systems such as \cite{MedWouNij:J18} cannot achieve a robust performance when non-cooperative vehicles exist, while the minimax policy proposed in \cite{YanShe:J22a} is designed for fully non-cooperative systems and fails to take advantage of the cooperation of vehicles. To tackle the challenge, we propose a short-horizon trajectory planning policy by modifying the minimax policy, which can exploit the benefit of cooperation while maintaining the robustness against non-cooperative vehicles.

We focus on the policy of a general cooperative vehicle $s_i\in\mathcal{S}_r$, and let $\{t_{i,k}\,|\,k\ge0\}$ be the set of decision times satisfying $p_i(t_{i,0})=-B$.\footnote{Note that $t_{i,k}$ is different from $t'_k$ used in the long-horizon policy.} Let $s_j\in\mathcal{S}_{r'}$ be the last cooperative vehicle in front of $s_i$ based on the passing order. According to our assumption, $s_i$ can obtain part of the position and velocity information of $s_j$ through communication. We assume that at time $t_{i,k}$, the latest information that $s_i$ has obtained on $s_j$ is $\langle p_j(t_{\mathrm{obs}}),v_j(t_{\mathrm{obs}})\rangle$, where $t_{\mathrm{obs}}\le t_{i,k}$.

\SetInd{0.25em}{0.5em}
\begin{algorithm}[!t]
  \caption{Short-horizon trajectory planning policy for $s_i$ at time $t_{i,k}$}\label{alg:2}
  \KwOut{Trajectory of $s_i$ over $[t_{i,k},t_{i,k+1}]$}

  \uIf{$p_i(t_{i,k})+\frac{v_i(t_{i,k})^2}{2a_{\mathrm{m}}}>0$}{
      Return the trajectory with maximum acceleration.
  }
  \Else{
      Find $\sigma^*$, the trajectory with the smallest cost among all safe trajectories against non-cooperative vehicles based on the current knowledge. Let $\langle p^*,v^*\rangle$ be the state of $s_i$ at time $t_{i,k+1}$ corresponding to $\sigma^*$.

      Find $\mathcal{F}_{i,k}^{\mathrm{fol}}$, which contains all possible states $\langle p_{\mathrm{tar}},v_{\mathrm{tar}}\rangle$ of $s_i$ at time $t_{i,k+1}$ satisfying: (i) $p_{\mathrm{tar}}+\frac{v_{\mathrm{tar}}^2}{2a_{\mathrm{m}}}\le0$, and (ii) for any $t_{\mathrm{D}}\ge t_{i,k+1}$, by fixing the state $\langle p_{\mathrm{tar}},v_{\mathrm{tar}}\rangle$ of $s_i$ at $t_{i,k+1}$ and the state $\langle p_j(t_{\mathrm{obs}}),v_j(t_{\mathrm{obs}})\rangle$ of $s_j$ at $t_{\mathrm{obs}}$, and assuming that both $s_i$ and $s_j$ adopt the minimum acceleration before $t_{\mathrm{D}}$ and adopt the maximum acceleration after $t_{\mathrm{D}}$, it holds that
      \begin{equation}
      p_i(t)\le p_j(t)-l-d_{r'}\cdot \mathds{1}_{r\ne r'},\quad \forall t\ge t_{i,k+1}.
      \end{equation}

      \uIf{$p_j(t_{\mathrm{obs}})+\frac{v_j(t_{\mathrm{obs}})^2}{2a_{\mathrm{m}}}>0$ and $p^*+\frac{v^{*2}}{2a_{\mathrm{m}}}>0$}{
        Return $\sigma^*$.
      }
      \uElseIf{$\mathcal{F}_{i,k}^{\mathrm{fol}}=\varnothing$}{
        Return the trajectory with minimum acceleration.
      }
      \Else{
        Obtain $p_{\mathrm{tar}}$ and $v_{\mathrm{tar}}$ by solving the minimax optimization problem \eqref{eq:opt2}.

        Return an arbitrary trajectory over $[t_{i,k},t_{i,k+1}]$ satisfying $p_i(t_{i,k+1})=p_{\mathrm{tar}}$, $v_i(t_{i,k+1})=v_{\mathrm{tar}}$.
      }
  }
\end{algorithm}

The proposed short-horizon trajectory planning policy for $s_i$ at time $t_{i,k}$ is shown in Algorithm \ref{alg:2}. First, similar to the minimax policy in \cite{YanShe:J22a}, Algorithm \ref{alg:2} is based on the following minimax optimization problem (see Line 11)
\begin{equation}
  \mathop{\arg\min}_{\langle p_{\mathrm{tar}},v_{\mathrm{tar}}\rangle \in\mathcal{F}_{i,k}^{\mathrm{fol}}}\; \max_{\mathcal{K}_{k+1}\in\mathscr{J}_{i,k}}\; V(p_{\mathrm{tar}},v_{\mathrm{tar}};\mathcal{K}_{k+1})\label{eq:opt2}
\end{equation}
which aims to achieve a robust performance against the trajectory uncertainty of non-cooperative vehicles. Specifically, the objective function $V(p,v;\mathcal{K})$ provides an evaluation of the state $\langle p,v\rangle$ under the situation information $\mathcal{K}$; the inner optimization aims to find the worst case $\mathcal{K}_{k+1}$ based on the current knowledge $\mathscr{J}_{i,k}$; and the outer minimization aims to find the target state $\langle p_{\mathrm{tar}},v_{\mathrm{tar}}\rangle$ resulting in the best worst-case performance. Detailed definitions of these variables can be found in \cite{YanShe:J22a}.

Despite the similarities above, the following two differences can also be noted, which reflect how the proposed policy deals with the cooperative vehicles.
\begin{itemize}
  \item Different from \cite{YanShe:J22a}, the minimax optimization \eqref{eq:opt2} is not always adopted in Algorithm \ref{alg:2}. Specifically, Lines 1--2 show that $s_i$ will directly adopt the maximum acceleration without solving the optimization problem after its state enters the region
      \begin{equation}
          \mathcal{C}\triangleq\bigg\{\langle p,v\rangle\,\bigg|\,p+\frac{v^2}{2a_{\mathrm{m}}}>0\bigg\}.
      \end{equation}
      Note that the cooperative vehicles whose states lie in $\mathcal{C}$ will certainly enter the intersection area in finite time. Therefore, for these vehicles, further adjustments can only make a slight difference on the time period of their intersection occupation. Contrarily, withdrawing the opportunity of these further adjustments provides the cooperative vehicles with clearer information about the future trajectories of others, which is more beneficial to improve the overall performance.
  \item The feasible space of $\langle p_{\mathrm{tar}},v_{\mathrm{tar}}\rangle$ in the outer minimization of \eqref{eq:opt2} is also different from that in \cite{YanShe:J22a}. Specifically, the two conditions in Line 5 are introduced to reduce the original feasible space to $\mathcal{F}_{i,k}^{\mathrm{fol}}$. These constraints aim to ensure that $s_i$ always maintains a sufficient distance with the last cooperative vehicle $s_j$, so that it can safely pass the intersection area after $s_j$.
\end{itemize}


In the following, we provide two theoretical results on the proposed short-horizon policy.
\begin{theorem}\label{thm:2}
Assume that the length of the short-horizon position range $[-B,0]$ satisfies
\begin{equation}\label{eq:thm2}
B\ge\frac{v_{\mathrm{M}}^2}{a_{\mathrm{m}}}+\frac{v_{\mathrm{M}}^2}{2a_{\mathrm{M}}} +l+\max\big\{d_{r'}\,\big|\,1\le r'\le R\big\}.
\end{equation}
If all cooperative vehicles follow Algorithm \ref{alg:2} in the range $[-B,0]$ and adopt the maximum acceleration after passing the position $0$, then all cooperative vehicles can safely pass the intersection area.
\end{theorem}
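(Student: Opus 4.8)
The plan is to reduce the global safety requirement to a pairwise invariant and then verify that invariant by induction over the decision times. First I would observe that, once the passing order is fixed, it suffices to control each cooperative vehicle $s_i\in\mathcal{S}_r$ relative to its immediate predecessor $s_j\in\mathcal{S}_{r'}$ in the passing order: if every such consecutive pair maintains $p_i(t)\le p_j(t)-l-d_{r'}\mathds{1}_{r\ne r'}$ for all $t$, then the corresponding inequality for an arbitrary pair follows by summing the (nonnegative) gaps along the order chain, which yields \eqref{eq:S1} for same-road pairs (gap $\ge l$ forbids positional overlap) and \eqref{eq:S2} for cross-road pairs (gap $\ge l+d_{r'}$ forces $s_j$'s rear past $d_{r'}$ before $s_i$'s front reaches $0$). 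I would also split the two obligations: avoiding non-cooperative vehicles, and avoiding other cooperative vehicles.

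For the non-cooperative part, I would argue branch by branch that Algorithm~\ref{alg:2} never returns a trajectory unsafe against non-cooperative vehicles, inheriting the minimax guarantee of \cite{YanShe:J22a}. The branches returning $\sigma^*$ (Line~6) or a target in $\mathcal{F}_{i,k}^{\mathrm{fol}}$ (Line~12) are safe because both feasible sets are contained in the minimax-safe set of \cite{YanShe:J22a}, and the minimum-acceleration fallback (Line~8) is trivially safe. The only nonroutine case is the maximum-acceleration override in Lines~1--2: there $s_i$'s state lies in the committed region $\mathcal{C}$, from which the vehicle must enter the intersection regardless of control. I would show that the policy only lets a state enter $\mathcal{C}$ through a decision already certified safe by the minimax step, so the worst-case non-cooperative occupancy has been cleared and hastening the passage with maximum acceleration cannot create a new conflict.

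The core is the cooperative part, handled by the invariant that, at every decision time $t_{i,k}$, either $s_i$ is already committed (state in $\mathcal{C}$) with gap to $s_j$ at least $l+d_{r'}\mathds{1}_{r\ne r'}$, or $\langle p_i(t_{i,k}),v_i(t_{i,k})\rangle$ lies in the safe-following set implicitly defined by condition (ii) of $\mathcal{F}_{i,k}^{\mathrm{fol}}$. For the base case I would use the entry state $\langle -B,v_{\mathrm{M}}\rangle$: Theorem~\ref{thm:1} guarantees that consecutive vehicles enter $[-B,0]$ separated by at least $l+\delta+d_{r'}\mathds{1}_{r\ne r'}$ while moving at $v_{\mathrm{M}}$, and the hypothesis \eqref{eq:thm2} gives condition (i) together with enough room for the worst-case deceleration-then-recovery behind a slow predecessor. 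For the inductive step I would propagate the invariant through each branch: a target drawn from $\mathcal{F}_{i,k}^{\mathrm{fol}}$ preserves the gap for all future $t$ by the very definition of condition (ii); the committed branches (Lines~2 and~6) preserve it because \eqref{eq:thm2} was sized so that the gap accumulated before commitment cannot be eroded below $l+\max_{r'}d_{r'}$ during passage; and I would show the fallback $\mathcal{F}_{i,k}^{\mathrm{fol}}=\varnothing$ either does not arise under the invariant or, if it does, that maximal braking keeps $s_i$ strictly behind $s_j$.

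The hard part will be justifying that condition (ii) — in which both $s_i$ and $s_j$ are driven by the \emph{same} bang-bang law (minimum acceleration before a common switch $t_{\mathrm{D}}$, maximum acceleration after) — is a genuine worst case for the actual, independently controlled and communication-delayed trajectories. The key comparison lemma I would establish is that, among all admissible futures of $s_j$ from its observed state $\langle p_j(t_{\mathrm{obs}}),v_j(t_{\mathrm{obs}})\rangle$, the position-minimizing one is maximal braking followed by recovery, and that $s_i$ can always mirror this braking so the minimum gap is attained within the common-$t_{\mathrm{D}}$ family as $t_{\mathrm{D}}$ is swept; robustness to delay then follows since assuming $s_j$ has been braking since $t_{\mathrm{obs}}$ only lowers its position. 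Closing this lemma, together with the bookkeeping that matches the kinematic terms of \eqref{eq:thm2} — braking room $\tfrac{v_{\mathrm{M}}^2}{a_{\mathrm{m}}}$, recovery room $\tfrac{v_{\mathrm{M}}^2}{2a_{\mathrm{M}}}$, and the safety gap $l+\max_{r'}d_{r'}$ — to the committed-regime analysis, is where the real work lies.
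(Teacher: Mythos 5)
Your overall architecture---reduce to consecutive pairs in the passing order, use the forward invariance of $\mathcal{F}_{i,k}^{\mathrm{fol}}$ under minimum acceleration, and convert the budget \eqref{eq:thm2} into a residual gap once a vehicle commits to the region $\mathcal{C}$---matches the paper's proof. But there is one genuine gap: your base case invokes Theorem~\ref{thm:1} to guarantee that consecutive vehicles enter $[-B,0]$ already separated by $l+\delta+d_{r'}\cdot\mathds{1}_{r\ne r'}$. Theorem~\ref{thm:2} makes no such hypothesis: it must hold for arbitrary entry configurations, and Theorem~\ref{thm:1} only guarantees separation after some finite $t_{\mathrm{c}}$ that need not precede arrival at $-B$ (nor does it apply when non-cooperative vehicles are interleaved). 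Consequently your invariant ``committed with sufficient gap, or inside the safe-following set'' can already fail at $t_{i,0}$, and the induction has no anchor; the escape hatch you leave (``the fallback either does not arise or maximal braking keeps $s_i$ behind $s_j$'') is precisely where the content lies and is not supplied. The paper closes this without any entry-separation assumption via a gating argument you do not articulate: under maximum braking the quantity $p_i+v_i^2/(2a_{\mathrm{m}})$ is conserved, so a vehicle entering at $\langle -B,v_{\mathrm{M}}\rangle$ can never drift into $\mathcal{C}$ on its own while the Line~8 fallback is active, and condition (i) of $\mathcal{F}_{i,k}^{\mathrm{fol}}$ keeps the Line~12 branch uncommitted as well; hence the only gateway into $\mathcal{C}$ is Line~6, whose test $p_j(t_{\mathrm{obs}})+v_j(t_{\mathrm{obs}})^2/(2a_{\mathrm{m}})>0$ forces the predecessor to commit first. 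At that instant $p_i\le -B+v_{\mathrm{M}}^2/(2a_{\mathrm{m}})$ while $p_j\ge -v_{\mathrm{M}}^2/(2a_{\mathrm{m}})$, and \eqref{eq:thm2} then delivers the gap $l+\max_{r'}d_{r'}$ even after $s_j$'s worst-case recovery shortfall $v_{\mathrm{M}}^2/(2a_{\mathrm{M}})$. This, not the long-horizon stage, is the source of the ``gap accumulated before commitment.''

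A smaller point: the ``hard part'' you flag---showing that the common-$t_{\mathrm{D}}$ bang-bang family dominates all admissible futures of $s_j$---is not actually needed. Since $p_i>0$ implies $p_i+v_i^2/(2a_{\mathrm{m}})>0$, a conflict requires $s_i$ to have entered $\mathcal{C}$, so the gap only has to be verified from the commitment time $t_{i,k_0}$ onward; at that time $s_j$ is already committed and hence deterministically max-accelerating, so a single member of the family (the one with $t_{\mathrm{D}}=t_{i,k_0}$) suffices, the braking interval $[t_{\mathrm{obs}},t_{\mathrm{D}}]$ merely underestimating $p_j$. No worst-case comparison over arbitrary behaviours of $s_j$ is required.
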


\begin{proof}
See Appendix \ref{pf:thm2}.
\end{proof}

\begin{theorem}\label{thm:3}
Assume that there are no non-cooperative vehicles in the system, and all cooperative vehicles follow Algorithm \ref{alg:2} in the range $[-B,0]$. Let $\mu$ be an upper bound of $t_{i,k+1}-t_{i,k}$ for all $s_i\in\mathcal{S}^{\text{coop}}$ and $k\ge0$, and define $\delta^*\triangleq v_{\mathrm{M}}(\tau+\mu)(1+\frac{a_{\mathrm{m}}}{a_{\mathrm{M}}})$.\footnote{Recall that $\tau$ is the maximum communication delay among cooperative vehicles defined in Section \ref{sec:mod}.} If $v_i(t_{i,0})=v_{\mathrm{M}}$ holds for any $s_i\in\mathcal{S}^{\text{coop}}$ and
\begin{align}
  & v_{\mathrm{M}}(t_{j_2,0}-t_{j_1,0}) \ge l+\delta^*+d_{r'_1}\cdot \mathds{1}_{r'_1\ne r'_2} \notag\\
  \text{or}\quad & v_{\mathrm{M}}(t_{j_1,0}-t_{j_2,0}) \ge l+\delta^*+d_{r'_2}\cdot \mathds{1}_{r'_1\ne r'_2}\label{eq:cond}
\end{align}
holds for any $s_{j_1}\in\mathcal{S}^{\text{coop}}\cap\mathcal{S}_{r'_1}$ and $s_{j_2}\in\mathcal{S}^{\text{coop}}\cap\mathcal{S}_{r'_2}$, then all cooperative vehicles will maintain the maximum velocity $v_{\mathrm{M}}$ in the position range $[-B,0]$.
\end{theorem}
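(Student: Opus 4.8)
The plan is to prove, by a double induction over the fixed passing order (outer) and the decision index $k$ (inner), that at every decision time each cooperative vehicle returns a constant-$v_{\mathrm{M}}$ trajectory. The starting point is that with $\mathcal{S}^{\text{non-coop}}=\varnothing$ every trajectory is vacuously safe against non-cooperative vehicles, so $\sigma^*$ in Line 4 degenerates to the globally cost-minimizing trajectory; since the relative scheduling cost decreases as the vehicle moves faster and farther forward, and $v_i(t_{i,k})=v_{\mathrm{M}}$ under the inner hypothesis, $\sigma^*$ is the constant-$v_{\mathrm{M}}$ trajectory, with next state $\langle p^*,v^*\rangle=\langle p_i(t_{i,k})+v_{\mathrm{M}}(t_{i,k+1}-t_{i,k}),v_{\mathrm{M}}\rangle$. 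For the same reason the inner $\max$ in \eqref{eq:opt2} collapses to the deterministic cost, so the outer minimization merely selects the cost-minimal element of $\mathcal{F}_{i,k}^{\mathrm{fol}}$. It therefore suffices to show that at each step either a branch returning $\sigma^*$ or maximum acceleration fires, or the constant-$v_{\mathrm{M}}$ target is feasible and hence chosen.

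The crux is to verify condition (ii) in the definition of $\mathcal{F}_{i,k}^{\mathrm{fol}}$ (Line 5) for the constant-$v_{\mathrm{M}}$ target, using only the delayed predecessor state $\langle p_j(t_{\mathrm{obs}}),v_{\mathrm{M}}\rangle$, where $t_{i,k+1}-t_{\mathrm{obs}}\le\tau+\mu$ by the inner hypothesis that $s_j$ also holds $v_{\mathrm{M}}$. First I would reduce condition (ii) to a scalar inequality on the minimum gap: under the common ``minimum acceleration before $t_{\mathrm{D}}$, maximum after'' profile with velocities clipped to $[0,v_{\mathrm{M}}]$, the follower $s_i$ begins the maneuver $t_{i,k+1}-t_{\mathrm{obs}}$ later than $s_j$, so $v_i(t)\ge v_j(t)$ for all $t\ge t_{i,k+1}$; hence $p_j(t)-p_i(t)$ is non-increasing and its infimum is the asymptotic gap after both recover $v_{\mathrm{M}}$. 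A direct computation writes this asymptotic gap as $v_{\mathrm{M}}(t_{i,0}-t_{j,0})-g(t_{\mathrm{D}})$, where $g(t_{\mathrm{D}})$ is the difference of the position losses of the two brake-and-recover maneuvers. Setting $\Delta:=t_{i,k+1}-t_{\mathrm{obs}}$ and maximizing over $t_{\mathrm{D}}$, the worst case is the instant the delayed leader just reaches a full stop, which yields
\begin{equation}
\max_{t_{\mathrm{D}}} g(t_{\mathrm{D}})=\Big(1+\frac{a_{\mathrm{m}}}{a_{\mathrm{M}}}\Big)\Big[v_{\mathrm{M}}\Delta-\tfrac{1}{2}a_{\mathrm{m}}\Delta^2\Big]\le v_{\mathrm{M}}(\tau+\mu)\Big(1+\frac{a_{\mathrm{m}}}{a_{\mathrm{M}}}\Big)=\delta^*,
\end{equation}
the factor $1+a_{\mathrm{m}}/a_{\mathrm{M}}$ arising from the deceleration/acceleration asymmetry of the maneuver. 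Applying \eqref{eq:cond} to the pair $(s_j,s_i)$ (for which only the disjunct $v_{\mathrm{M}}(t_{i,0}-t_{j,0})\ge l+\delta^*+d_{r'}\mathds{1}_{r\ne r'}$ can hold, the other side being negative) then bounds the asymptotic gap below by $l+d_{r'}\mathds{1}_{r\ne r'}$, which is exactly condition (ii).

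With this feasibility lemma I would close the inner induction by a case split on Algorithm \ref{alg:2}. If $s_i\in\mathcal{C}$, Lines 1--2 apply maximum acceleration, which holds the velocity at the cap $v_{\mathrm{M}}$. If $s_i\notin\mathcal{C}$ but its constant-$v_{\mathrm{M}}$ next state lies in $\mathcal{C}$ (so $p^*+v^{*2}/(2a_{\mathrm{m}})>0$), the separation estimate gives $p_j(t_{\mathrm{obs}})>p^*$, whence $p_j(t_{\mathrm{obs}})+v_{\mathrm{M}}^2/(2a_{\mathrm{m}})>0$ as well, so both conditions in Line 6 hold and Line 7 returns $\sigma^*$, i.e.\ constant $v_{\mathrm{M}}$. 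In the remaining case the constant-$v_{\mathrm{M}}$ target also satisfies part (i), so by the lemma it lies in $\mathcal{F}_{i,k}^{\mathrm{fol}}$; thus $\mathcal{F}_{i,k}^{\mathrm{fol}}\ne\varnothing$, Lines 8--9 are skipped, and the degenerate minimax of Line 11 returns the cost-minimal feasible target, which is precisely the constant-$v_{\mathrm{M}}$ state. Every branch therefore maintains $v_{\mathrm{M}}$; together with $v_i(t_{i,0})=v_{\mathrm{M}}$ this closes the inner induction on $k$, while the outer base case (the leading vehicle, whose virtual predecessor at $+\infty$ makes condition (ii) vacuous) and the inductive step above close the outer induction.

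The main obstacle is the estimate of the second paragraph: identifying the worst-case switch time $t_{\mathrm{D}}$ and carrying out the clipped brake-and-recover computation so that the bound $\max_{t_{\mathrm{D}}}g(t_{\mathrm{D}})\le\delta^*$ comes out with the exact constant $\delta^*=v_{\mathrm{M}}(\tau+\mu)(1+a_{\mathrm{m}}/a_{\mathrm{M}})$. The velocity clipping at $0$ makes the loss function non-convex and moves the maximizer to the instant the delayed leader stops, and the offset between the leader's and follower's maneuver onsets must be tracked carefully; once this estimate is in hand, the reductions of $\sigma^*$ and \eqref{eq:opt2} to the deterministic cost and the remaining branch bookkeeping are routine.
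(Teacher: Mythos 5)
Your proposal is correct and follows essentially the same route as the paper's proof: an induction establishing that every decision returns the constant-$v_{\mathrm{M}}$ trajectory, the degeneration of $\sigma^*$ and of the inner maximization when $\mathcal{S}^{\text{non-coop}}=\varnothing$, the key step of placing the constant-$v_{\mathrm{M}}$ target in $\mathcal{F}_{i,k}^{\mathrm{fol}}$ by bounding the worst-case gap shrinkage by $\delta^*$, and the same branch bookkeeping for Lines 1--2, 6--7 and 10--12. The only substantive difference is internal to that key estimate: the paper avoids locating the worst-case $t_{\mathrm{D}}$ altogether by bounding the velocity difference by $a_{\mathrm{m}}(\tau+\mu)$ and multiplying by the durations $v_{\mathrm{M}}/a_{\mathrm{m}}$ and $v_{\mathrm{M}}/a_{\mathrm{M}}$ (a cruder bound that holds uniformly in $\Delta$ and $t_{\mathrm{D}}$, whereas your identification of the maximizer at the leader's stopping instant strictly requires $\tau+\mu\le v_{\mathrm{M}}/(a_{\mathrm{m}}+a_{\mathrm{M}})$ --- harmless here, and your final inequality survives either way).
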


\begin{proof}
See Appendix \ref{pf:thm3}.
\end{proof}

Intuitively, Theorem \ref{thm:2} claims that if the length of the position range $[-B,0]$ is not too short, then Algorithm \ref{alg:2} can ensure the safety of cooperative vehicles. On the other hand, Theorem \ref{thm:3} verifies the efficiency of Algorithm \ref{alg:2}. Specifically, by noting the similarity between \eqref{eq:cond} and \eqref{eq:conv}, the conditions in Theorem \ref{thm:3} mean that all vehicles are cooperative, and they have completed the long-horizon position adjustments where the safety distance parameter $\delta$ is not smaller than $\delta^*$. Under these conditions, Theorem \ref{thm:3} claims that vehicles will not further decelerate in the short-horizon stage. Note that this property is not true if the minimax policy in \cite{YanShe:J22a} is adopted. Therefore, the proposed short-horizon policy reduces unnecessary loss of efficiency by making full use of the cooperation relation.

\section{NUMERICAL RESULTS}\label{sec:simu}
In this section, we provide numerical results on the proposed long-horizon and short-horizon policies. In simulations, we set $R=3$, $d_r=5$ ($\forall r$), $l_i=5$, $v_{i,\mathrm{M}}=20$, $a_{i,\mathrm{m}}=4$, and $a_{i,\mathrm{M}}=3$ ($\forall s_i$). Furthermore, let $v_{\mathrm{R}}=1$, $\tau=0.1$, $\mu=0.1$, $\delta=\delta^*\approx 9.33$. The gaps between adjacent $t'_k$ are set to $2$, and for each cooperative vehicle $s_i$, the gaps between adjacent $t_{i,k}$ are set to $0.1$.\footnote{By adding SI units (i.e., $\mathrm{m}$, $\mathrm{m/s}$, etc.), our choice of parameters is close to the real-world scenarios. Furthermore, we add a small random number to each parameter to meet the conditions of Theorem \ref{thm:1}.}

We assume that the time interval between two cooperative vehicles passing the position $-A-B$ follows independent exponential distribution with expectation $1/\lambda^{\text{coop}}$, and the time interval between two non-cooperative vehicles passing the position $-A-B$ is a constant $1/\lambda^{\text{non-coop}}$. Thus, $\lambda^{\text{coop}}$ and $\lambda^{\text{non-coop}}$ are the traffic rates of cooperative and non-cooperative vehicles, respectively. Furthermore, each new vehicle randomly occurs on the three roads with equal probability, and its initial velocity is $v_{\mathrm{M}}$.

\begin{figure} [!t]
  \centering%
  \includegraphics[width=0.48\textwidth]{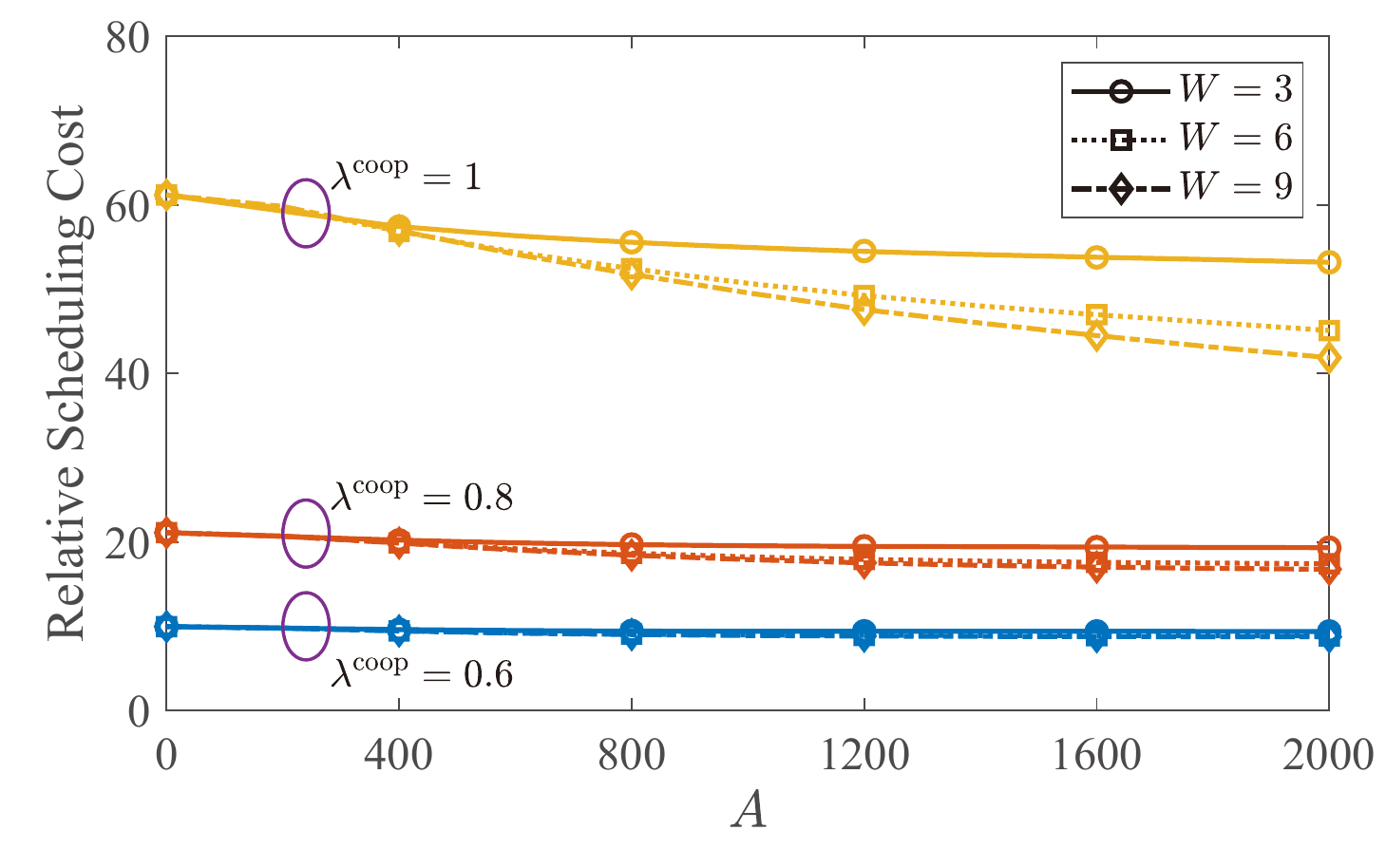}
  \caption{The average relative scheduling cost of cooperative vehicles under different $A$, $W$ and $\lambda^{\text{coop}}$. Parameters $\lambda^{\text{non-coop}}=0$ and $B=200$. For each data point, $1000$ simulations are performed, and the time span of input vehicles is $500$ in each simulation.}
  \label{fig:3}
\end{figure}

First, we let $\lambda^{\text{non-coop}}=0$, in which case the long-horizon policy plays a key role on the overall performance according to Theorem \ref{thm:3}. The average scheduling cost for different $A$, $W$ and $\lambda^{\text{coop}}$ is shown in Fig. \ref{fig:3}. Specifically, note that when $A=0$, the long-horizon policy is equivalent to following the initial order of cooperative vehicles. 
First, compared with the initial order, the proposed long-horizon policy makes little difference when $\lambda^{\text{coop}}$ is low (e.g., $\lambda^{\text{coop}}=0.6$), while it substantially improves the efficiency when $\lambda^{\text{coop}}$ is high (e.g., $\lambda^{\text{coop}}=1$). Intuitively, the benefit results from gathering the cooperative vehicles from the same road that are initially close to each other and letting them continuously pass the intersection area. In dense traffic, vehicles are closer and thus optimizing the passing order has a larger potential.
Second, for a large $\lambda^{\text{coop}}$, a better performance can be achieved by enlarging $W$ and $A$. Specifically, $W$ influences the achievable efficiency after convergence, while $A$ determines the progress of the convergence process. Note that $A$ is often a predetermined parameter in practice, and $W$ can be freely chosen. Therefore, a small $W$ is often sufficient when $A$ is small, while for a larger $A$, $W$ should also be chosen larger.\footnote{Recall that $W$ also influences the computational complexity of the optimization problem \eqref{eq:opt}. Thus, it should not be chosen too large.}


\begin{figure} [!t]
  \centering%
  \includegraphics[width=0.48\textwidth]{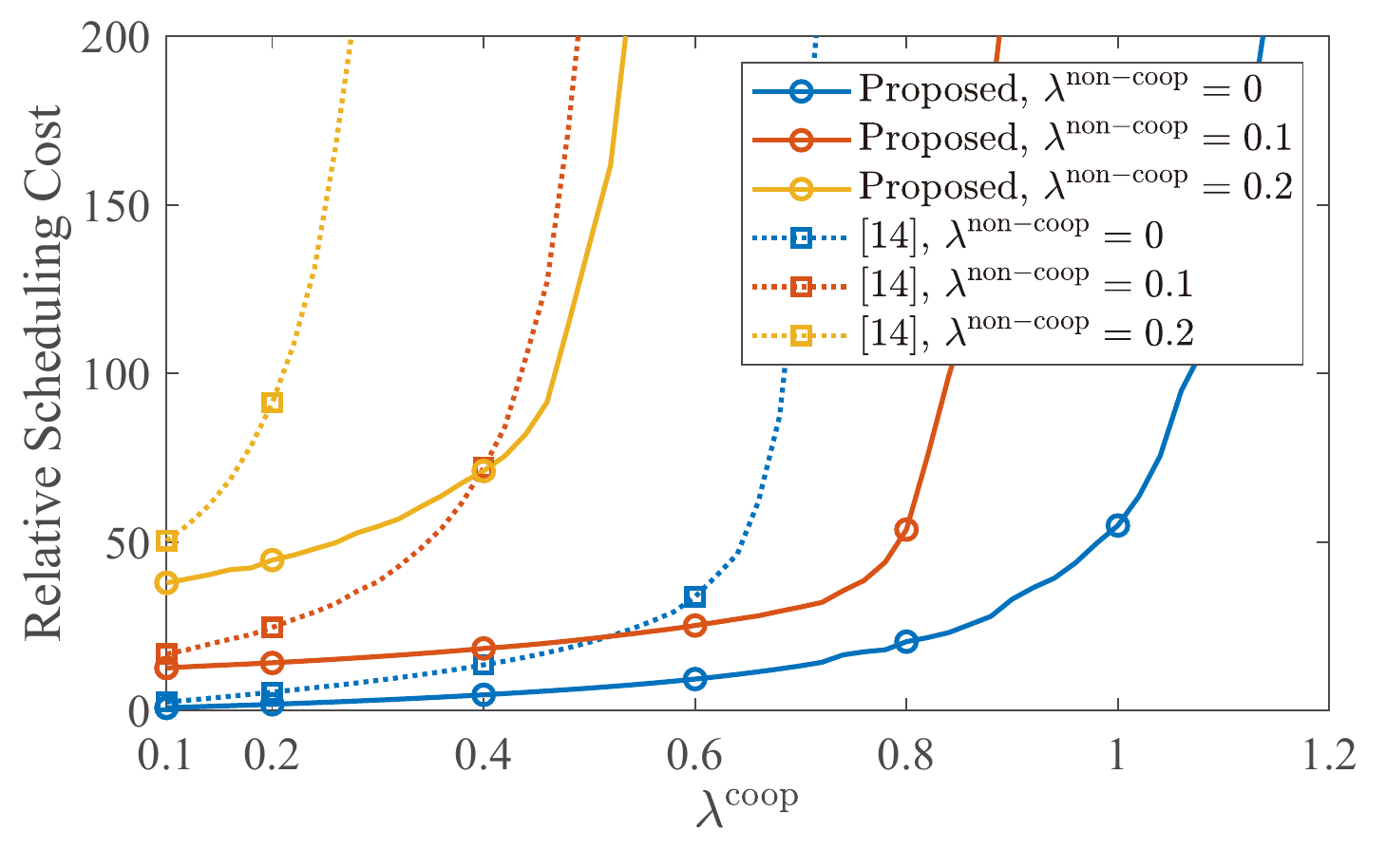}
  \caption{The average relative scheduling cost of cooperative vehicles under different $\lambda^{\text{coop}}$ and $\lambda^{\text{non-coop}}$, where both the minimax policy in \cite{YanShe:J22a} and Algorithm \ref{alg:2} are adopted for the short-horizon stage. Parameters $W=6$, $A=600$ and $B=200$. For each data point, $1000$ simulations are performed, and the time span of input vehicles is $500$ in each simulation.}
  \label{fig:4}
\end{figure}

Then we fix the long-horizon parameters $A=600$ and $W=6$, and focus on the effect of different short-horizon policies. Fig. \ref{fig:4} provides the relative scheduling cost for different $\lambda^{\text{coop}}$ and $\lambda^{\text{non-coop}}$, and both the minimax policy in \cite{YanShe:J22a} and Algorithm \ref{alg:2} are adopted as the short-horizon policy. According to Fig. \ref{fig:4}, for any choice of $\lambda^{\text{non-coop}}$, both a better average efficiency and a larger throughput for cooperative vehicles can be obtained by following the proposed Algorithm \ref{alg:2}. This observation implies that Algorithm \ref{alg:2} exploits the benefits of cooperation and improves the overall performance.

\section{CONCLUSION}\label{sec:conc}
This letter established a distributed two-stage policy for the cooperative vehicles passing an intersection, where both cooperative and non-cooperative vehicles exist. Specifically, the long-horizon stage aims to optimize the passing order of the cooperative vehicles by self organization, where each cooperative vehicle independently formulates optimization problems and adjusts the trajectory; the short-horizon stage aims to generate safe and efficient trajectories when cooperative vehicles approach the intersection area. Numerical experiments showed that both the long-horizon and the short-horizon policies significantly improve the efficiency of cooperative vehicles during the scheduling.

\appendix

\subsection{Proof of Theorem \ref{thm:1}}\label{pf:thm1}
We first introduce the following lemma.
\begin{lemma}\label{lemma:1}
Consider the optimization problem \eqref{eq:opt}. Assume that
\begin{align*}
  &P_i^k(s_{j_1})-P_i^k(s_{j_2})\ge l+\delta, \\
  &\quad\forall s_{j_1},s_{j_2}\in\mathcal{S}_i^k\cap\mathcal{S}_{r'}\setminus\{s_i\} \text{ with }p_{j_1}(0)>p_{j_2}(0), \forall r'\\
  &P_i^k(s_{j_1})-P_i^k(s_{j_2})\ge l+\delta+d_{r'_1} \\
  &\quad\text{or}\quad P_i^k(s_{j_2})-P_i^k(s_{j_1})\ge l+\delta+d_{r'_2}, \\
  &\quad\forall s_{j_1}\in\mathcal{S}_i^k\cap\mathcal{S}_{r'_1}\setminus\{s_i\}, s_{j_2}\in\mathcal{S}_i^k\cap\mathcal{S}_{r'_2}\setminus\{s_i\}, r'_1\ne r'_2.
\end{align*}
Let $s_m\in\mathcal{S}_{r^*}$ be the vehicle in $\mathcal{S}_i^k\setminus\{s_i\}$ with the smallest $P_i^k$, and assume that $P_i^k(s_m)-d_ {r^*}/2 \ne P_i^k(s_i)-d_r/2$. Then the solution of \eqref{eq:opt} satisfies
\begin{equation}
  p_i^*=\big\{P_i^k(s_m)-(l+\delta+d_{r^*}\mathds{1}_{r^*\ne r}),P_i(s_i)\big\}.\label{eq:lemma1}
\end{equation}
\end{lemma}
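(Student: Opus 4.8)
The plan is to rewrite the minimization in \eqref{eq:opt} as the maximization of $\sum_{s_j\in\mathcal{S}_i^k}p^*_j$, i.e.\ as packing every vehicle as far forward as the caps $p^*_j\le p_j(t_k)$ and the separation constraints permit, and then to use the hypothesis that the vehicles in $\mathcal{S}_i^k\setminus\{s_i\}$ are already mutually safely separated under $P_i^k$. Since $P_i^k$ packs each road maximally forward (every non-$s_i$ vehicle is either at its own cap or tight by exactly $l+\delta$ against its same-road predecessor), the contribution of the other vehicles is already maximal, and the only reason to move any of them is to make room for $s_i$. I would first argue that in an optimal solution no vehicle other than $s_i$ and the backmost vehicle $s_m$ is displaced: pushing an interior vehicle backward propagates to all its same-road followers and so costs at least as much as it frees, whereas $s_m$—having the smallest $P_i^k$—has no follower behind it and is the only ``cheap'' vehicle to yield. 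This collapses the problem to placing the pair $\{s_i,s_m\}$ subject to their mutual separation $l+\delta+d_{r^*}\mathds{1}_{r^*\ne r}$ and the caps $p^*_i\le P_i^k(s_i)$, $p^*_m\le P_i^k(s_m)$.

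For this two-vehicle subproblem there are exactly two admissible orderings. Placing $s_i$ behind $s_m$ keeps $s_m$ at its cap and pushes $s_i$ to $p^*_i=\min\big\{P_i^k(s_i),\,P_i^k(s_m)-(l+\delta+d_{r^*}\mathds{1}_{r^*\ne r})\big\}$; placing $s_i$ in front forces $s_m$ to yield and leaves $s_i$ at its own cap $p^*_i=P_i^k(s_i)$. In either ordering $p^*_i$ is one of the two values asserted in \eqref{eq:lemma1}. Comparing the two objectives, a short computation shows that the first minus the second equals $2\big[(P_i^k(s_i)-d_r/2)-(P_i^k(s_m)-d_{r^*}/2)\big]$, so the ordering placing the vehicle with the larger ``center'' $P-d/2$ in front is optimal. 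The non-degeneracy hypothesis $P_i^k(s_m)-d_{r^*}/2\ne P_i^k(s_i)-d_r/2$ makes this comparison strict, pinning $p^*_i$ to exactly one of the two listed values.

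The main obstacle is the decoupling step: showing that $s_m$ is the only vehicle whose backward displacement can be active, so that the problem truly reduces to the two-vehicle comparison. The subtlety is that the cross-road constraints are disjunctive, so $s_i$ might a priori come to rest in an interior gap just behind some non-backmost vehicle, producing a value other than the two claimed. Same-road interior gaps are harmless—being tight at exactly $l+\delta$, they cannot house the two clearances (each at least $l+\delta$) that an inserted $s_i$ needs on its two sides—so only cross-road gaps are at issue, and for these I would use an exchange argument: any optimum with $s_i$ resting behind an interior vehicle is matched or beaten either by sliding $s_i$ up to $P_i^k(s_i)$ while pushing the follower-free obstruction back, or by dropping $s_i$ to the rear behind $s_m$, since every interior backward shift either hits a cap or cascades to same-road followers and costs at least what it saves. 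Once interior resting points are excluded, $s_m$ is the unique active constraint and the analysis of the second paragraph applies; the linear-independence hypothesis underpins the whole argument by ruling out the coincidences (equal centers or exactly critical gaps) that would otherwise create ties among these cases.
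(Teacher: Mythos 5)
Your overall strategy (show that in the optimum $s_i$ ends up behind every other vehicle in $\mathcal{S}_i^k$, then read off $p_i^*$) is the same as the paper's, but the step where you exclude interior resting points has a genuine gap, and it traces to a single missing ingredient: you never invoke the truncation step (Lines 11--13 of Algorithm~\ref{alg:long}), which guarantees that every retained vehicle $s_j\in\mathcal{S}_i^k$ satisfies $P_i^k(s_j)-d_{r'}/2\ge P_i^k(s_i)-d_r/2$. Your dismissal of interior gaps rests on the premise that they are ``tight at exactly $l+\delta$,'' which is false: $P_i^k$ is defined by $\min\{p_j(t'_k),P_i^k(\mathfrak{F}(s_j))-(l+\delta)\}$, so consecutive gaps are only bounded below by $l+\delta$ and can be arbitrarily large. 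A large cross-road gap could in principle house $s_i$ at essentially no cost to anyone, which would contradict the lemma --- what actually rules this out is the center-ordering property above: since every $s_j\in\mathcal{S}_i^k\setminus\{s_i\}$ has its equivalent center ahead of $s_i$'s, the cap $p_i^*\le P_i^k(s_i)$ already forces $s_i$ behind $s_j$'s cap by at least $(d_{r'}-d_r)/2$, so placing $s_i$ in front of any such $s_j$ requires displacing $s_j$ (and possibly its followers) by more than $l+\delta+(d_{r'}+d_r)/2$, which is exactly the quantitative comparison the paper carries out in \eqref{eq:pflemma1}--\eqref{eq:pflemma2}. Without this property your cascading ``costs at least what it frees'' heuristic does not close, and with it the argument is no longer the loose exchange you describe but a concrete construction: restore every vehicle of $\mathcal{S}_{\text{later}}$ to $P_i^k$, put $s_i$ at the right-hand side of \eqref{eq:lemma1}, and verify the net objective strictly decreases.

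The same omission undermines your final step. Your two-vehicle comparison correctly yields that the ordering placing the larger-center vehicle in front is optimal, but the lemma asserts unconditionally that $p_i^*$ equals the \emph{minimum} of the two listed values (the braces in \eqref{eq:lemma1} are a typo for $\min$), not merely ``one of the two.'' If $s_i$ came to rest in front of $s_m$ you would get $p_i^*=P_i^k(s_i)$ even when $P_i^k(s_m)-(l+\delta+d_{r^*}\mathds{1}_{r^*\ne r})<P_i^k(s_i)$, violating the formula. To pin down the minimum you must again cite the truncation step together with the non-degeneracy hypothesis to conclude $P_i^k(s_m)-d_{r^*}/2>P_i^k(s_i)-d_r/2$ strictly, so that the $s_m$-in-front ordering wins. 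Once you add the center-ordering property at both places, your argument becomes essentially the paper's proof.
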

\vspace{0.5em}

\begin{proof}
Note that the solution $\{p^*_j\}$ of \eqref{eq:opt} also gives an order for all vehicles in $\mathcal{S}_i^k$. Based on this order, we divide $\mathcal{S}_i^k$ into the following three parts:
\begin{itemize}
  \item the ego-vehicle $s_i$;
  \item a subset $\mathcal{S}_{\text{later}}$ containing the vehicles that pass the intersection after $s_i$;
  \item a subset $\mathcal{S}_{\text{earlier}}$ containing the vehicles that pass the intersection before $s_i$.
\end{itemize}
We only need to prove that $\mathcal{S}_{\text{later}}=\varnothing$.

Now we prove by contradiction. If $\mathcal{S}_{\text{later}}$ is non-empty, we first note that vehicles in $\mathcal{S}_{\text{later}}$ must be on different roads from $s_i$, since the initial order of the vehicles from the same road is preserved. In the following, we will define another $\{\hat{p}^*_j\}$ and show that it has a better performance than $\{p^*_j\}$.

First, we consider the case that $s_m\in \mathcal{S}_{\text{earlier}}$. We can define $\hat{p}^*_j=P_i^k(s_j)$ for any $s_j\in\mathcal{S}_{\text{later}}$ and define $\hat{p}^*_j=p^*_j$ for other $s_j$. We can easily check that $\{\hat{p}^*_j\}$ satisfies all the constraints and results in a smaller objective function.

Then, we consider the case that $s_m\in \mathcal{S}_{\text{later}}$, and thus $s_m$ is on a different road from $s_i$. Now we define $\{\hat{p}^*_j\}$ as follows.
\begin{itemize}
  \item For $s_i$, $\hat{p}^*_i$ is defined by the right-hand side of \eqref{eq:lemma1}.
  \item For vehicles $s_j\in\mathcal{S}_{\text{later}}$, take $\hat{p}^*_j=P_i^k(s_j)$.
  \item For vehicles $s_j\in\mathcal{S}_{\text{earlier}}$, take $\hat{p}^*_j=p^*_j$.
\end{itemize}
It is easy to check that $\{\hat{p}^*_j\}$ satisfies the constraints. Then we analyze the objective function.
\begin{itemize}
  \item If $P_i^k(s_m)-(l+\delta+d_{r^*})\ge P_i^k(s_i)$, then we have $\hat{p}^*_i=P_i^k(s_i)\le p^*_i$. Thus, $\hat{p}^*_j\le p^*_j$ holds for all $s_j$, and the inequality strictly holds for the vehicles in $\mathcal{S}_{\text{later}}$. Therefore, the objective function corresponding to $\hat{p}^*_i$ is smaller.
  \item If $P_i^k(s_m)-(l+\delta+d_{r^*})\le P_i^k(s_i)$, then
  \begin{align}
      \hat{p}^*_i-p^*_i&\ge \hat{p}^*_i-P_i^k(s_i)\notag\\
      &=\big[P_i^k(s_m)-(l+\delta+d_{r^*})\big]-P_i^k(s_i)\notag\\
      &> (d_{r^*}-d_r)/2-(l+\delta+d_{r^*})\notag\\
      &=-(l+\delta)-(d_{r^*}+d_r)/2\label{eq:pflemma1}
      \end{align}
      \begin{align}
      \hat{p}^*_m-p^*_m&\ge \hat{p}^*_m-\big[P_i^k(s_i)-(l+\delta+d_r)\big]\notag\\
      &=P_i^k(s_m)-\big[P_i^k(s_i)-(l+\delta+d_r)\big]\notag\\
      &> (d_{r^*}-d_r)/2+(l+\delta+d_r)\notag\\
      &=(l+\delta)+(d_{r^*}+d_r)/2.\label{eq:pflemma2}
  \end{align}
  The last inequalities in \eqref{eq:pflemma1} and \eqref{eq:pflemma2} come from the truncation step when defining the set $\mathcal{S}_i^k$ and the assumption $P_i^k(s_m)-d_ {r^*}/2 \ne P_i^k(s_i)-d_r/2$ in the lemma. Thus, the total cost of $s_i$ and $s_m$ decreases. Noting that the cost of other vehicles are non-increasing, we conclude that the objective function corresponding to $\hat{p}^*_i$ is smaller.
\end{itemize}

The discussions above illustrates the contradiction. Therefore, we have $\mathcal{S}_{\text{later}}=\varnothing$.
\end{proof}

Now we provide the proof of Theorem \ref{thm:1}.

\begin{proof}
Assume that there are $N^{\text{coop}}$ cooperative vehicles in the system. For any $1\le n\le N^{\text{coop}}$ and any decision time $t'_k$, define $\mathcal{W}_{k,n}$ to be the subset of $\mathcal{S}^{\text{coop}}$ which includes the $n$ vehicles $s_i$ with the largest $\tilde{p}_i(t'_k)\triangleq p_i(t'_k)-d_r/2$, where $r$ is the road index corresponding to $s_i$. According to the linear independence assumption in the theorem, the values of $\tilde{p}_i(t'_k)$ for different $s_i\in\mathcal{S}^{\text{coop}}$ are unequal.

We will prove the theorem by induction. Specifically, we will show that for any $1\le n\le N^{\text{coop}}$, there exists some $k_n\ge0$ such that
\begin{itemize}
  \item the set $\mathcal{W}_{k,n}$ is the same for any $k\ge k_n$;
  \item at time $t'_{k_n}$, we have (i) any two cooperative vehicles $s_{j_1}\in\mathcal{W}_{k,n}\cap\mathcal{S}_{r'_1}$ and $s_{j_2}\in\mathcal{W}_{k,n}\cap\mathcal{S}_{r'_2}$ satisfy
\begin{align}
  & p_{j_1}(t'_{k_n})-p_{j_2}(t'_{k_n}) \ge l+\delta+d_{r'_1}\cdot \mathds{1}_{r'_1\ne r'_2} \notag\\
  \text{or}\quad & p_{j_2}(t'_{k_n})-p_{j_1}(t'_{k_n}) \ge l+\delta+d_{r'_2}\cdot \mathds{1}_{r'_1\ne r'_2}
\end{align}
and (ii) all vehicles in $\mathcal{W}_{k,n}$ maintain the maximum velocity $v_{\mathrm{M}}$ after time $t'_{k_n}$.
\end{itemize}

For the case $n=1$, we can choose $k_1=0$. This is because for the vehicle $s_i$ with the largest $\tilde{p}_i(t'_0)$, the corresponding vehicle set $\mathcal{S}_i^0=\{s_i\}$ holds. Therefore, it will maintain the maximum velocity $v_{\mathrm{M}}$ during $[t'_0,t'_1]$. Thus we can inductively check that it will always be the vehicle with the largest $\tilde{p}_i(t'_k)$, and hence $\mathcal{W}_{k,1}$ remains constant for $k\ge0$.

Now we consider the general case with $2\le n\le N^{\text{coop}}$ by assuming that the induction for $n-1$ has completed. First we introduce some definitions. Let $s_m\in\mathcal{S}_{r^*}$ be the vehicle in $\mathcal{W}_{k_{n-1},n-1}$ with the smallest $\tilde{p}_m(t'_{k_{n-1}})$. On each road, we take a vehicle which is outside the set $\mathcal{W}_{k_{n-1},n-1}$ and has the largest initial position, and we collect these vehicles and form a set $\mathcal{A}$. For any vehicle $s_i\in\mathcal{A}$, define
\begin{equation}\label{eq:pfthm1_1}
  G_i(t'_k)\triangleq p_m(t'_k)-p_i(t'_k)-(l+\delta+d_{r^*}\mathds{1}_{r\ne r^*})
\end{equation}
where $r$ is the road index corresponding to $s_i$.

For a given $k$, define the vehicle
\begin{equation}\label{eq:pfthm1_2}
s_{i_0(k)}\triangleq\mathop{\arg\max}_{s_i\in\mathcal{A}}\;\tilde{p}_i(t'_k).
\end{equation}
We aim to show that there exists a large enough $k'_n$ such that
\begin{equation}\label{eq:pfthm1_3}
G_{i_0(k'_n)}(t'_{k'_n})\ge0.
\end{equation}
Actually, if \eqref{eq:pfthm1_3} holds, then we can complete the induction through the following steps.
\begin{itemize}
  \item[(i)] According to \eqref{eq:pfthm1_2}, we have $\mathcal{S}_{i_0(k'_n)}^{k'_n}\subseteq \mathcal{W}_{k_{n-1},n-1}\cup\{s_{i_0(k'_n)}\}$. Thus, we can check that the corresponding optimization problem \eqref{eq:opt} satisfies the condition of Lemma \ref{lemma:1}.
  \item[(ii)] According to Lemma \ref{lemma:1}, the vehicle $s_{i_0(k'_n)}$ will maintain the maximum velocity $v_{\mathrm{M}}$ within the time period $[t'_{k'_n},t'_{k'_n+1}]$. Thus, we have $s_{i_0(k'_n+1)}=s_{i_0(k'_n)}$.
  \item[(iii)] By repeating the steps (i) and (ii), we can conclude that $s_{i_0(k'_n)}$ will always maintain the velocity $v_{\mathrm{M}}$ after time $t'_{k'_n}$, and $s_{i_0(k)}=s_{i_0(k'_n)}$ holds for all $k\ge k'_n$.
  \item[(iv)] Since the initial order is preserved for the vehicles from the same road, there exists $k_n\ge k'_n$ such that $\mathcal{W}_{k_n,n}=\mathcal{W}_{k_{n-1},n-1}\cup\{s_{i_0(k'_n)}\}$. Then we can easily check that the induction has been completed.
\end{itemize}

Finally, we show that \eqref{eq:pfthm1_3} holds for some $k'_n$. The following two results can firstly be obtained.
\begin{itemize}
  \item For any $s_i\in\mathcal{A}$, $G_i(t'_k)$ is non-decreasing with respect to $t'_k$.
  \item For each $k\ge k_{n-1}$, if $G_{i_0(k)}(t'_k)<0$, then either $G_{i_0(k)}(t'_{k+1})=0$ holds, or $G_ {i_0(k)}(t'_{k+1})-G_{i_0(k)}(t'_k)$ is not smaller than a positive constant.
\end{itemize}
Specifically, the first result can be directly checked by the definition \eqref{eq:pfthm1_1} and the induction hypothesis; to obtain the second result, we should note that the optimization problem corresponding to $s_{i_0(k'_n)}$ in the time interval $[t'_{k'_n-1},t'_{k'_n}]$ satisfies the conditions in Lemma \ref{lemma:1}, and then utilize the result of Lemma \ref{lemma:1} and Line 15 of Algorithm \ref{alg:long}.

Note that for any $k$, $G_{i_0(k)}(t'_k)$ must be in a finite set $\{G_i(t'_k)\,|\,s_i\in\mathcal{A}\}$. According to the two results above, there cannot exist an infinite number of $k$ such that $G_{i_0(k)}(t'_k)<0$ holds. Therefore, $k'_n$ must exist.
\end{proof}

\subsection{Proof of Theorem \ref{thm:2}}\label{pf:thm2}
\begin{proof}
According to Algorithm \ref{alg:2}, if a cooperative vehicle $s_i$ has not entered the region $\mathcal{C}$ at time $t_k$, then there exist three possibilities which correspond to Lines 6--7 (Case (a)), Lines 8--9 (Case (b)) and Lines 10--12 (Case (c)) in the algorithm, respectively. Note that Case (a) will only occur at the last decision time before $s_i$ enters the region $\mathcal{C}$, and we denote this decision time as $t_{i,k_0}$.

According to the definition of $\sigma^*$ in Line 4, the output trajectory of $s_i$ after $t_{i,k_0}$ must be conflict-free with all non-cooperative vehicles. In the following, we will show that $s_i$ is conflict-free with $s_j$, which is the last cooperative vehicle in front of $s_i$ based on the passing order. Specifically, by defining
\begin{equation}\label{eq:L}
  L_j\triangleq l+d_{r'}\cdot \mathds{1}_{r\ne r'}
\end{equation}
where $r$ and $r'$ are the road index of $s_i$ and $s_j$ respectively, we will prove that
\begin{equation}\label{eq:pfthm2_1}
  \{t\,|\,p_i(t)\in(x,+\infty),p_j(t)\in(x,x+L_j)\}=\varnothing
\end{equation}
holds for any $x>0$. We can easily check that \eqref{eq:pfthm2_1} is a sufficient condition for the safety condition \eqref{eq:S1} or \eqref{eq:S2}. Now we separately consider the following two situations.

(1) First, we assume that Case (c) never occurs in the decision of $s_i$. In other words, before Case (a) occurs, all decision times correspond to Case (b). Then $s_0$ always follows the minimum acceleration before time $t_{i,k_0}$. Now we consider the decision of $s_0$ at $t_{i,k_0}$. Based on the ``if''-sentence in Line 6, $p_j(t_{\mathrm{obs}})+\frac{v_j(t_{\mathrm{obs}})^2}{2a_{\mathrm{m}}}>0$ must hold. Since $t_{i,k_0}\ge t_{\mathrm{obs}}$, we know that $s_j$ has entered the region $\mathcal{C}$ at $t_{i,k_0}$, and thus it will always follow the maximum acceleration after $t_{i,k_0}$. Therefore, for any $t\ge t_{i,k_0}$, we have
\begin{align}
  p_j(t)-p_i(t)&\ge \Big(p_j(t_{i,k_0})-\frac{v_{\mathrm{M}}^2}{2a_{\mathrm{M}}} +v_{\mathrm{M}}(t-t_{i,k_0})\Big)\notag\\
  &\quad\qquad -\big(p_i(t_{i,k_0})+v_{\mathrm{M}}(t-t_{i,k_0})\big)\notag\\
  &\ge p_j(t_{\mathrm{obs}})-p_i(t_{i,k_0})-\frac{v_{\mathrm{M}}^2}{2a_{\mathrm{M}}}\notag\\
  &\ge -\frac{v_{\mathrm{M}}^2}{2a_{\mathrm{m}}} -\Big(-B+\frac{v_{\mathrm{M}}^2}{2a_{\mathrm{m}}}\Big) -\frac{v_{\mathrm{M}}^2}{2a_{\mathrm{M}}}\notag\\
  &\ge l+\max\big\{d_{r'}\,\big|\,1\le r'\le R\big\} \ge L_j.
\end{align}
Hence, \eqref{eq:pfthm2_1} holds.

(2) Second, we assume that Case (c) occurs in the decision of $s_i$, and let $t_{i,k_1}$ be the decision time when Case (c) first occurs. Now we prove by induction that any decision time $t_{i,k}$ with $k_1< k< k_0$ also corresponds to Case (c), i.e., the set $\mathcal{F}_{i,k}^{\mathrm{fol}}$ is non-empty. If the set $\mathcal{F}_{i,k-1}^{\mathrm{fol}}$ is non-empty, then $\langle p_i(t_{i,k}),v_i(t_{i,k})\rangle$ must belong to it by Lines 11--12. Then we consider the trajectory $\sigma_{\mathrm{D}}$ starting from the state $\langle p_i(t_{i,k}),v_i(t_{i,k})\rangle$ and following the minimum acceleration. We can easily check that under this trajectory, $\langle p_i(t_{i,k+1}),v_i(t_{i,k+1})\rangle\in\mathcal{F}_{i,k}^{\mathrm{fol}}$. Thus we complete the induction.

Therefore, the decision time $t_{i,k_0-1}$ also corresponds to Case (c). Since $t_{i,k_0}$ corresponds to Case (a), $s_j$ must have entered the region $\mathcal{C}$ before $t_{i,k_0}$ and will always follow the maximum acceleration after $t_{i,k_0}$. Let $\sigma_{\mathrm{A}}$ be the trajectory of $s_i$ starting from the state $\langle p_i(t_{i,k_0}),v_i(t_{i,k_0})\rangle$ and following the maximum acceleration. Then, noting that $\langle p_i(t_{i,k_0}),v_i(t_{i,k_0})\rangle\in\mathcal{F}_{i,k_0-1}^{\mathrm{fol}}$ and according to the definition of $\mathcal{F}_{i,k_0-1}^{\mathrm{fol}}$ in Line 5, we must have $p_i(t)\le p_j(t)-L_j$ for any $t\ge t_{i,k_0}$ if $s_i$ follows the trajectory $\sigma_{\mathrm{A}}$. Thus, $p_i(t)\le p_j(t)-L_j$ $(\forall t\ge t_{i,k_0})$ also holds for any trajectory of $s_i$ after $t_{i,k_0}$. In other words, \eqref{eq:pfthm2_1} holds.
\end{proof}

\subsection{Proof of Theorem \ref{thm:3}}\label{pf:thm3}
\begin{proof}
We perform induction on the time. Assume that the outputs of all decisions of the cooperative vehicles before time $t$ are the trajectories with the maximum velocity. Then we consider the first short-horizon decision made by the vehicles after $t$, for which the corresponding vehicle is denoted by $s_i$ and the decision time is $t_{i,k}$. According to the induction hypothesis, all vehicles will follow the maximum velocity $v_{\mathrm{M}}$ before $t_{i,k}$.

If $s_i$ is the first vehicle based on the passing order, then it will always follow the maximum velocity, and the induction is completed.

In the following, we assume that $s_i$ is not the first vehicle, and denote $s_j$ to be the last cooperative vehicle in front of $s_i$ based on the passing order. Assume that for $s_i$, the last state observation on $s_j$ is $\langle p_j(t_{\mathrm{obs}}),v_j(t_{\mathrm{obs}})\rangle$, and we must have $t_{\mathrm{obs}}\ge t_{i,k}-\tau$. Now we focus on Algorithm \ref{alg:2}. Since there are no non-cooperative vehicles in the system, the trajectory $\sigma^*$ defined in Line 4 is actually the trajectory with the maximum acceleration. Therefore, we have
\begin{align}
  &p^*+\frac{v^{*2}}{2a_{\mathrm{m}}}\notag\\
  &= \big[p_i(t_{i,k})+v_{\mathrm{M}}(t_{i,k+1}-t_{i,k})\big]+\frac{v_{\mathrm{M}}^2}{2a_{\mathrm{m}}}\notag\\
  &\le \big[p_j(t_{i,k})-L_j-\delta^*\big]+v_{\mathrm{M}}(t_{i,k+1}-t_{i,k}) +\frac{v_{\mathrm{M}}^2}{2a_{\mathrm{m}}}\notag\\
  &= \big[p_j(t_{\mathrm{obs}})+v_{\mathrm{M}}(t_{i,k}-t_{\mathrm{obs}})\big] -L_j-\delta^*\notag\\
  &\quad\qquad +v_{\mathrm{M}}(t_{i,k+1}-t_{i,k})+\frac{v_j(t_{\mathrm{obs}})^2}{2a_{\mathrm{m}}}\notag\\
  &\le \Big[p_j(t_{\mathrm{obs}})+\frac{v_j(t_{\mathrm{obs}})^2}{2a_{\mathrm{m}}}\Big]
  -\big(L_j+\delta^*-v_{\mathrm{M}}\tau-v_{\mathrm{M}}\mu\big)\notag\\
  &\le p_j(t_{\mathrm{obs}})+\frac{v_j(t_{\mathrm{obs}})^2}{2a_{\mathrm{m}}}\label{eq:pfthm3}
\end{align}
where $\langle p^*,v^*\rangle$ is defined in Line 4 and $L_j$ is identically defined as in \eqref{eq:L}.

We first consider the case $p^*+\frac{v^{*2}}{2a_{\mathrm{m}}}>0$. According to \eqref{eq:pfthm3}, we also have $p_j(t_{\mathrm{obs}})+\frac{v_j(t_{\mathrm{obs}})^2}{2a_{\mathrm{m}}}>0$. By Line 6, $s_i$ will return the trajectory $\sigma^*$ and the induction is completed.

Then, we consider the case $p^*+\frac{v^{*2}}{2a_{\mathrm{m}}}\le 0$. To start, we prove that $\mathcal{F}_{i,k}^{\mathrm{fol}}\ne\varnothing$ by showing $\langle p^*,v^*\rangle\in \mathcal{F}_{i,k}^{\mathrm{fol}}$. We arbitrarily fix $t_{\mathrm{D}}\ge t_{i,k+1}$, fix the state $\langle p^*,v^*\rangle$ of $s_i$ at $t_{i,k+1}$, and fix the state $\langle p_j(t_{\mathrm{obs}}),v_j(t_{\mathrm{obs}})\rangle$ of $s_j$ at $t_{\mathrm{obs}}$. Then we assume that both $s_i$ and $s_j$ adopt the minimum acceleration before $t_{\mathrm{D}}$ and adopt the maximum acceleration after $t_{\mathrm{D}}$. Under this setting, we can check that the velocity difference between $s_i$ and $s_j$ is always not larger than $a_{\mathrm{m}}(t_{i,k+1}-t_{\mathrm{obs}})\le a_{\mathrm{m}}(\tau+\mu)$. Therefore, the difference between their displacements before time $t_{\mathrm{D}}$ is not larger than $a_ {\mathrm{m}}(\tau+\mu)\cdot\frac{v_{\mathrm{M}}}{a_{\mathrm{m}}}$, and the difference between their displacements after time $t_{\mathrm{D}}$ is not larger than $a_{\mathrm{m}}(\tau+\mu)\cdot\frac{v_{\mathrm{M}}}{a_{\mathrm{M}}}$. According to \eqref{eq:cond}, $p_i(t)\le p_j(t)-L_j$ always holds. Therefore, according to the definition of $\mathcal{F}_{i,k}^{\mathrm{fol}}$ in Line 5, we have $\langle p^*,v^*\rangle\in \mathcal{F}_{i,k}^{\mathrm{fol}}$.

Since $\mathcal{F}_{i,k}^{\mathrm{fol}}$ is non-empty, the algorithm goes to Lines 11--12, and the minimax optimization problem will be solved. However, since there are no non-cooperative vehicles, by noting that $\langle p^*,v^*\rangle\in \mathcal{F}_{i,k}^{\mathrm{fol}}$, we can easily check that $\langle p^*,v^*\rangle$ will be the result of the optimization and the policy will return the trajectory with the maximum acceleration. So far, we have completed the induction in all scenarios.
\end{proof}





\end{document}